\newtheorem{definition}{Definition}
\newtheorem{theorem}{Theorem}
\title{Averager-copier-voter models for hybrid opinion dynamics in complex networks\thanks{The authors are with the School of Electrical Engineering and Computer Science at Washington State University. This work was partially supported by United States National Science Foundation Grants 1545104 and 1635184, and the Homeland Security Advanced Research Projects Agency (HSARPA).}}
\author{Mengran Xue and Sandip Roy}
\begin{document}

\maketitle

\begin{abstract}
A hybrid model for opinion dynamics in complex multi-agent networks is introduced, wherein some continuous-valued agents average neighbors' opinions to update their own, while other discrete-valued agents use stochastic copying and voting protocols.  A statistical and graph-theoretic analysis of the model is undertaken, and consensus is shown to be achieved whenever the network matrix  is ergodic.  Also,  the time required for consensus is characterized, in terms of the network's graph and the distribution of agents of different types.
\end{abstract}

\section{Introduction}

Synchronization or consensus processes occur in myriad natural-world and engineered networks.  A number of models have been proposed for synchronization phenomena in the physical world, such as formation flight of birds and phase alignment of oscillator circuits \cite{sync1,sync2,sync3,sync4,sync5}.  These models describe local interaction mechanisms that induce network components' continuous-valued states to come to a common value over time. In a parallel track, voter and influence models have been introduced to describe consensus among networked cyber or human agents with discrete-valued opinions (e.g., distributed decision-making algorithms in computer networks, elections or gossip in a community)\cite{voter1,voter2,voter3,voter4,voter5,voter6,voter7,voter8}.    These models, and their attendant graph-theoretic analyses, have proved to be keen tools for predicting and designing the dynamics of complex networks (e.g. \cite{dhal}).

Increasingly, complex systems comprise tightly interconnected cyber, physical, and human elements -- consider for instance microrobot swarms that move through the bloodstream, or computer-assisted decision-making in infrastructure control centers \cite{gill,cpsbook1,cpsbook2,cps3}.  These cyber-physical networks inevitably involve hybridized decision-making dynamics, with some components or agents governed by continuous-valued laws (e.g., diffusion equations) and others engaged in discrete behaviors (e.g., voting).  Even for complex systems without these heterogeneities, hybridized decision-making dynamics may be of interest -- for instance, voting processes in social networks may involve a mixture of players with graded and binary opinions, and building thermal processes are continuous-valued but are subject to discretized sensing and actuation.   These contexts motivate the study of synchronization or consensus processes among agents with hybrid states  and interactions.

In this brief paper, a stochastic network model for consensus among mixed continuous-valued and discrete-valued agents is introduced. The model is not drawn from a particular application domain, but rather is an effort to abstractly represent essential hybrid interactions that are involved in decision-making or consensus. Specifically, the model encapsulates agents that update their states or opinions in three different ways: linear averaging \cite{sync3,lincon1,lincon2,lincon3,lincon4}, stochastic copying, and stochastic voting \cite{voter1,voter2,voter3,voter4,voter5,voter6,voter7}.  A statistical (two-moment) analysis of the model is undertaken, and used to verify that consensus is achieved under broad connectivity conditions.  Also, a spectral and graph-theoretic analysis of the consensus time is undertaken, which shows that hybridized decision-making processes reach consensus slowly compared to pure averaging and pure voting processes. 

We note that preliminary results on the convergence of the hybrid model were presented in \cite{confpap}.  Further studies on hybrid consensus models that follow on our initial work, including for mobile and time-varying networks, can be found in \cite{sheng,sheng2}.

\section{Hybrid Consensus Model: Formulation}

A network with $n$ {\em agents}, labeled $i=1,\hdots, n$, is considered. Each agent has a scalar {\em opinion}
$x_i[k]$ that evolves in discrete time ($k=0,1,\hdots$). 
The initial opinion $x_i[0]$ of each agent $i$ is assumed to be a arbitrary real number in the interval $[0,1]$.   At each time step $k>0$, each agent $i$ updates its opinion based on the opinions of the agents
 in its {\em neighbor set} ${\cal N}_i$ (possibly including itself).  The network is assumed to comprise three types of agents, which have different update rules reflecting discrete vs continuous opinions and update mechanisms:

1) Each of the $m_a$ {\em averager agents} (labeled $1, \hdots, m_a$) computes its next opinion as a weighted average of the current opinions of its neighbors.  Specifically, for an averager agent $i$, the time-$(k+1)$ opinion is computed as $x_i[k+1]=\sum_{j=1}^n g_{ij} x_j[k]$, where $g_{ij} \ge 0$, $\sum_{j=1}^n g_{ij}=1$, and $g_{ij} \neq 0$ only if $j \in {\cal N}_i$.  We note that the averagers maintain a continuous-valued opinion, and compute a continuous-valued function to determine their next opinion.

2)  Each of the $m_c$ {\em copier agents} (labeled $m_a+1,\hdots, m_a+m_c$) computes its next opinion by stochastically polling or selecting a neighbor, and assuming the opinion of that neighbor.  Specifically, the time-$(k+1)$ opinion of a copier agent $i$ is determined as follows.  First, the agent chooses a {\em determining agent} $j$ with probability $g_{ij}$, where $g_{ij} \ge 0$, $\sum_{j=1}^n g_{ij}=1$, and $g_{ij}>0$ only if $j \in {\cal N}_i$; this selection of the determining agent is independent of all other past and current selections.  The agent then simply copies the status of the determining agent, i.e. $x_i[k+1]=x_j[k]$.
The copiers thus maintain a continuous-valued opinion, but updates this opinion via 
a discrete selection process. 

3) Each of the $m_v$ {\em voter agents} (labeled $m_a+m_c+1,\hdots, n$)  computes its next opinion as a binary choice (0 or 1), based on a
 weighted average of neighbors' opinions.  Specifically, to compute its time-$(k+1)$ opinion, a voter agent $i$ first computes a weighted average 
 $f_i[k+1]=\sum_{j=1}^n g_{ij} x_j[k]$, where $g_{ij} \ge 0$, $\sum_{j=1}^n g_{ij}=1$, and $g_{ij} \neq 0$ only if $j \in {\cal N}_i$.  The agent's next state is 
then determined as $x_i[k+1]=1$ with probability $f_i[k+1]$ and $x_i[k+1]=0$ otherwise, independently of all other current and past updates.  The voters thus can be viewed as computing a continuous function (a weighted average), but then selecting a binary opinion based on this computation.  Alternately, the voter $i$ can be interpreted as first selecting a neighbor $j$ with probability $g_{ij}$, and then setting its next status to $1$ with probability $x_j[k]$ (and to $0$ otherwise): such an update is statistically equivalent to first computing an average and then realizing the next opinion.  Based on this alternative description, the voters can be viewed as using an entirely discrete protocol, with binary states and discrete stochastic update procedures (neighbor selection followed by probabilistic realization of the next state).

The network interactions among the agents, whatever their types, are entirely specified by
the weights $g_{ij}$.  Thus, the row-stochastic matrix $G=[g_{ij}]$ is referred to as the {\em network matrix}.   The $i$th row of the network matrix, which indicates the weighting of neighbors' opinions by each agent, is denoted by ${\bf g}_i^T$ in our development.  Finally, the {\em opinion vector} is defined as ${\bf x}[k] =\begin{bmatrix} x_1[k] \\ \vdots 
\\ x_n[k] \end{bmatrix}$.  

The {\em network graph} $\Gamma$ is defined as a weighted digraph with $n$ vertices, which correspond to the $n$ agents in the network.  A directed edge is drawn from vertex $j$ to vertex $i$ (where self-loops are allowed) if and only if $g_{ij}>0$, and the weight of the edge is set to $g_{ij}$.    The graph $\Gamma$ illustrates which neighboring agents directly influence the next status of each agent.

\begin{figure} [!htb] 
	\centering
	\includegraphics[width=3.5in]{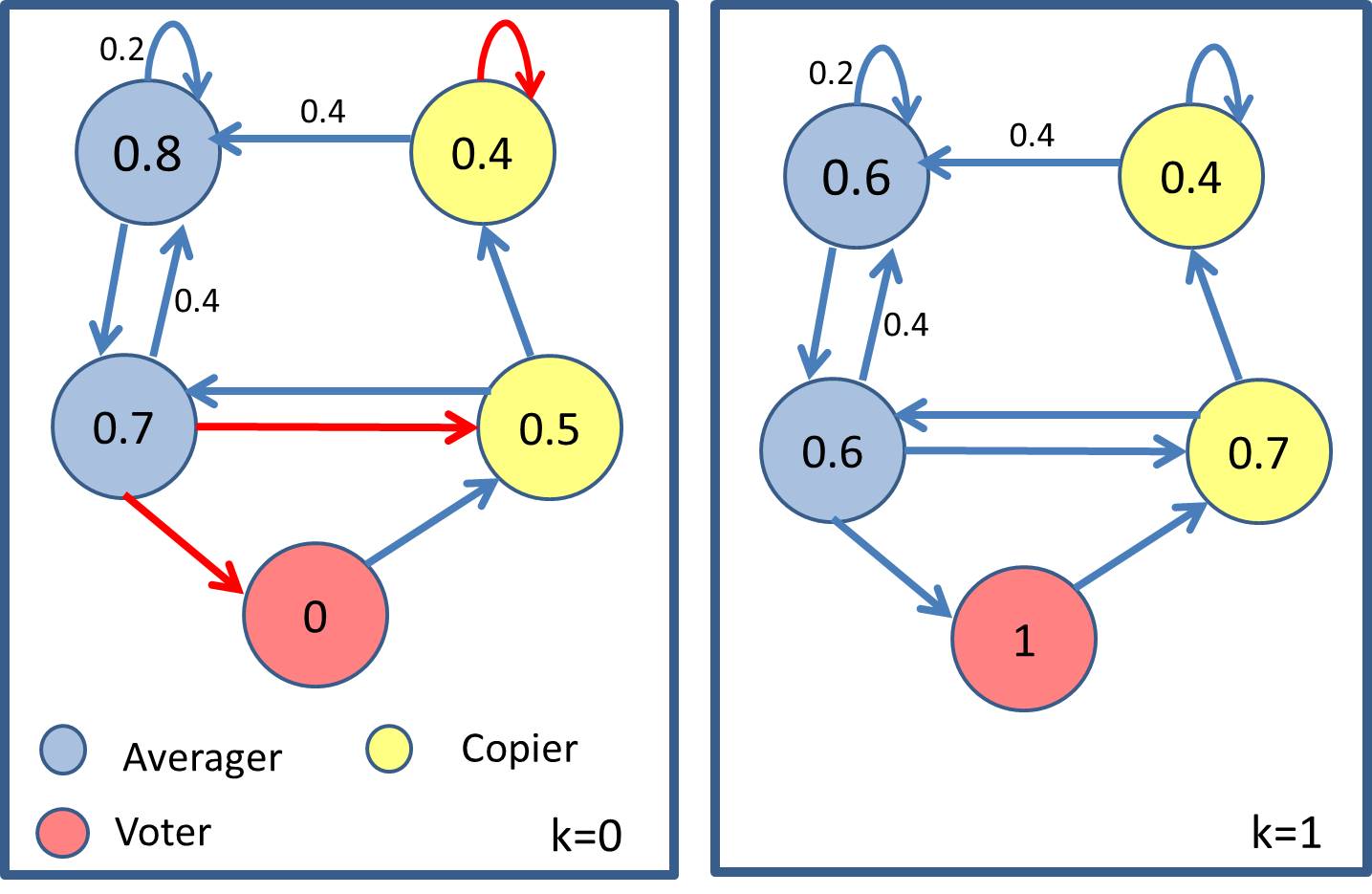}
	\caption{The state update for the HMAN is illustrated.  
	Averager agents compute a weighted average of neighbors' states, while copiers and voters
	update their states based on stochastically selected neighbors.}
    \label{fig:0}
\end{figure}

We refer to the dynamical network model as a whole as the {\bf hybrid multi-agent network} (HMAN),
see Figure \ref{fig:0} for an illustration.  
The HMAN encompasses several common models for synchronization or consensus.  The HMAN with only averagers is a classical deterministic linear model for network consensus  \cite{sync3,lincon1,lincon2,lincon3,tree}, and is also descriptive of social and physical-world diffusion processes (e.g., rumor or disease spread, heat exchange among chambers in an oven, or diffusion of a pollutant) \cite{colizza,scaled,incidental}.  Meanwhile, the case where all of the agents are either copiers or voters, and the agents' initial opinions are each binary ($0$ or $1$), 
resolves to the well-known voter model \cite{voter1,voter2,voter3}. The voter model is also a special instance of a broader class of quasi-linear discrete-valued network automata known as influence models \cite{influence1,influence2,chalee,roythesis}.

\section{Statistical Analysis and Conditions for Consensus}

A first question of interest is whether the HMAN achieves consensus when the network graph is connected, similarly to voter models and linear consensus/synchronization processes \cite{voter1,voter2,tree}.  
We primarily focus here on characterizing consensus from a mean-square-convergence perspective, as formalized in the following definition:

\begin{definition}
The HMAN is said to achieve mean-square consensus if 
$lim_{k \rightarrow \infty} E((x_i[k]-x_1[k])^2)=0$ for $i=2,\hdots,n$.
\end{definition}
Mean-square consensus implies that $lim_{k \rightarrow \infty} E((x_i[k]-x_j[k])^2)=0$ for any two agents $i$ and $j$.

 A two-moment statistical analysis of the HMAN's opinion vector is pursued, to enable development of graph-theoretic conditions for mean-square consensus.  The opinion vector turns out
to satisfy a moment-closure property,  like voter/influence models and infinite-server queueing network models  \cite{voter1,voter2,voter3,influence1,chalee,queue1,queue2}.  This closure property permits statistical analysis of the first and second moments via linear recursions.  Formally, let us define the {\bf extended opinion vector (EOV)} as ${\bf x}_{(e)}[k]=\begin{bmatrix} {\bf x}[k] \otimes {\bf x}[k] \\ {\bf x}[k] \end{bmatrix}$, where $\otimes$ refers to the Kronecker product.  The EOV
contains the agents' individual opinions and their pairwise products.  From the HMAN's update law,
the conditional expectation $E({\bf x}_{(e)}[k+1] \, | \, {\bf x}[k])$ can be readily computed as a linear function of ${\bf x}_{(e)}[k]$.  By further taking an expectation with respect to
${\bf x}[k]$, a recursion on the expected EOV is obtained:
\begin{equation}
E({\bf x}_{(e)}[k+1])=G_{(e)} E({\bf x}_{(e)}[k]),
\end{equation}
where recursion matrix has the form 
$G_{(e)}=\begin{bmatrix} G_2 & G_{21} \\ 0 & G \end{bmatrix}$.   To explicitly specify
the matrices $G_{2}$ and $G_{21}$, it is convenient to index each of their $n^2$ rows by the pair of agents $(i,j)$ whose next-state product expectation is specified by the row (thus, $(i,j)$ indexes the row $(i-1)n+j$).  The rows of $G_2$ and $G_{21}$ are given by the following:
\begin{itemize}
\item For $i \neq j$, row $(i,j)$ of $G_2$ equals $g_i^T \otimes g_j^T$, while row $(i,j)$ 
of $G_{21}$ is a zero vector.
\item  For $i=j$ corresponding to an averager agent, row $(i,j)$ of $G_2$ equals $g_i^T \otimes g_j^T$, while row $(i,j)$ of $G_{21}$ is a zero vector.
\item  For $i=j$ corresponding to a copier agent, row $(i,j)$ of $G_2$ has entries as follows:
the entry at column $(z-1)n+z$, where $z=1,\hdots, n$, is given by $g_{iz}$ while the remaining columns are zero.  Also, row $(i,j)$ of $G_{21}$ is zero.
\item For $i=j$ corresponding to a voter agent, row $(i,j)$ of $G_2$ is a zero vector, while row $(i,j)$ of $G_{21}$ equals ${\bf g}_i^T$.
\end{itemize}
The above specification of the matrix $G_{(e)}$ follows readily from first-principles analysis of the the conditional expectation of the EOV; details are omitted.

The statistical analysis of the EOV allows the development of graph-theoretic conditions for mean-square consensus.  The HMAN is found to achieve mean square consensus under broad connectivity conditions, which are comparable to those for the voter and linear consensus models, as formalized in the following theorem.

\begin{theorem}
The HMAN achieves mean-square consensus if the network graph (equivalently, network matrix) is ergodic\footnote{The network graph is called ergodic if it is strongly connected and also aperiodic, see e.g. \cite{gallager}.}.
\label{th:main}
\end{theorem}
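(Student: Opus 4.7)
\proof
The plan is to exploit the block-triangular form of $G_{(e)}$ to reduce mean-square consensus to showing that the expected pair-products $P_{ij}[k] = E(x_i[k]x_j[k])$ all converge to a common limit, since $E((x_i[k]-x_j[k])^2) = P_{ii}[k] + P_{jj}[k] - 2 P_{ij}[k]$. First I would dispose of the first-moment sub-recursion $\mu[k+1] = G\mu[k]$: ergodicity of $G$ gives $\mu[k]\to c\mathbf{1}_n$ with $c = \pi^T \mu[0]$ and $\pi$ the stationary distribution of $G$, and this pins down the driving term $G_{21}\mu[k]\to c\, G_{21}\mathbf{1}_n$ in the second-moment sub-recursion $P_{(e)}[k+1] = G_2 P_{(e)}[k] + G_{21}\mu[k]$.

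Next I would interpret $G_2$ as a (sub-stochastic) transition operator on the pair-state space $\{1,\dots,n\}^2$: pairs with $i\ne j$ and averager-diagonal pairs $(i,i)$ evolve as two independent copies of the chain $G$; copier-diagonal pairs remain on the diagonal as a single coalesced walker under $G$; voter-diagonal pairs have row-sum zero and act as absorbing ``killed'' states. A bookkeeping step that will carry a lot of weight is the identity $(I - G_2)\mathbf{1}_{n^2} = G_{21}\mathbf{1}_n$, whose two sides equal the indicator of voter-diagonal positions by direct inspection of the row sums.

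The argument then splits according to whether voters are present. When $m_v \ge 1$, strong connectivity of $G$ implies that from every pair-state the killed pair-chain reaches a voter-diagonal in finite time with positive probability, so every state is transient and the spectral radius of $G_2$ is strictly less than one; the inhomogeneous recursion then has a unique fixed point, and the identity above shows it to be $P_{(e)}^* = c\mathbf{1}_{n^2}$, so $P_{ij}[k]\to c$ uniformly in $(i,j)$. When $m_v = 0$, the driving term vanishes and $G_2$ is genuinely row-stochastic; using ergodicity and aperiodicity of $G$ together with the pair-chain interpretation, I would argue that $G_2^k$ converges to a rank-one matrix $\mathbf{1}_{n^2}\pi_{(2)}^T$, so all entries of $P_{(e)}[k]$ converge to the common value $\pi_{(2)}^T P_{(e)}[0]$. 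In both cases $P_{ii}[k] + P_{jj}[k] - 2P_{ij}[k]\to 0$, which is the claim.

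I expect the main obstacle to be the spectral analysis of $G_2$ in the no-voter case, because the presence of copiers makes the pair-diagonal absorbing under the copier sub-dynamics and thus $G_2$ reducible. Here one must separately track the mass that has coalesced onto the diagonal (evolving afterwards as the ergodic chain $G$) and the mass still performing two independent $G$-walks, and combine the finite-time coalescence probability with ergodicity on the diagonal to obtain the rank-one limit. The voter case is comparatively clean thanks to the killing mechanism forcing the spectral radius strictly below one and the algebraic identity delivering the fixed point with no further work.
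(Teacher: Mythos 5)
Your proposal is correct in outline but reaches the conclusion by a genuinely different route from the paper. The paper works entirely graph-theoretically: it builds the pair graph $\Gamma_2$ on $n^2$ vertices, characterizes its edges in terms of $\Gamma$, shows by an explicit path argument (from the diagonal vertex $(1,1)$ to every pair $(i,j)$) that $G_2$ (respectively the full $G_{(e)}$ when voters are present) is row-stochastic with a single ergodic recurrent class, and then invokes Perron--Frobenius to get $E({\bf x}_{(e)}[k])\rightarrow d{\bf 1}$; notably, it omits the details of the voter case ``to save space.'' You instead read $G_2$ as the transition operator of a coalescing/killed pair chain and split the analysis: for $m_v\ge 1$ you argue $\rho(G_2)<1$ by transience and then extract the fixed point $c{\bf 1}_{n^2}$ from the identity $(I-G_2){\bf 1}_{n^2}=G_{21}{\bf 1}_n$ --- a clean algebraic step the paper does not use, and which makes the voter case (the one the paper skips) the \emph{easier} one for you. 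Two caveats. First, your assertion that \emph{strong connectivity} alone lets the killed pair chain reach a voter diagonal from every pair state is not quite right: two independent walkers on a strongly connected but periodic graph may never occupy the same vertex simultaneously, so you genuinely need the aperiodicity half of ergodicity here (it is in the hypothesis, so this is an imprecision, not a gap, but it should be stated). Second, your no-voter case is left at the same sketch level as the paper's hardest step: establishing the rank-one limit $G_2^k\rightarrow{\bf 1}\pi_{(2)}^T$ despite the reducibility introduced by copiers is exactly what the paper's path argument in $\Gamma_2$ accomplishes, so to complete your write-up you would need to carry out essentially that argument (show every pair state leads to the diagonal-containing recurrent class and that this class is aperiodic). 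With those two points attended to, your argument closes, and the final step $E((x_i[k]-x_j[k])^2)=P_{ii}[k]+P_{jj}[k]-2P_{ij}[k]\rightarrow 0$ matches the paper's conclusion.
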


\begin{proof}
An analysis of the recursion matrix $G_{(e)}$ in terms of the graph $\Gamma$ is needed, to 
verify mean-square consensus.  The graph-theoretic analysis is subtly different for a HMAN without voter agents, in which case $G_{21}=0$, as compared to a HMAN which has voter agents.  The proof uses standard results and terminology for stochastic and nonnegative matrices, see \cite{gallager,berman}.

The proof requires defining a new graph $\Gamma_2$ based on the $n^2\times n^2$ matrix $G_2$. Precisely, we define  $\Gamma_2$ as a weighted digraph with $n^2$ vertices, labeled as $q=1,\hdots,n^2$.  An edge is drawn from vertex $r$ to vertex $q$ in $\Gamma_2$ (where $q$ and $r$ may be identical) if and only if $[G_{2}]_{qr}$ is non-zero. Noticing that each vertex in $\Gamma_2$ corresponds to 
a pairwise product $x_i[k] x_j[k]$ in the EOV, it is convenient alternatively to label each vertex $q$ by the pair $(i,j)$ where $q=n(i-1)+j$.  For the pairwise labeling of vertices, an edge is indicated by two pairs in order, i.e.
there is an edge from $(\tilde{i},\tilde{j})$  to $(i,j)$ if $[G_{2}]_{q\tilde{q}}$ is non-zero for 
$q=n(i-1)+j$ and $\tilde{q}=n(\tilde{i}-1)+\tilde{j}$.

Let us now prove the result for the averager-copier model (the HMAN without voter agents).  For this case,
the matrix $G_{2}$ is  row-stochastic. Further, if the network matrix $G$ is ergodic, we claim that $G_{2}$ has a single recurrent class which is ergodic.  To see why, it is useful to first determine which edges are present in the second-moment-recursion graph
$\Gamma_2$ in terms of the network graph $\Gamma$.  First, from the algebraic expression of $G_{2}$, there is an edge
from the vertex $(\tilde{i},\tilde{i})$ to the vertex $(i,i)$ in $\Gamma_2$ if and only if there is an edge from vertex $\tilde{i}$ to vertex $i$ in $\Gamma$.  Next, let us consider whether there is an edge 
from an arbitrary vertex $(\tilde{i},\tilde{j})$ to a vertex $(i,j)$ in $\Gamma_2$.  If either 
$i \neq j$ or $i$ corresponds to an averager agent, then there is an edge in $\Gamma_2$ if and only if 
there is an edge from $\tilde{i}$ to $i$ and an edge from $\tilde{j}$ to $j$ in $\Gamma$.  In the special case where $i=j$ and $i$ corresponds to a copier agent, notice that there are only edges from other identical vertex pairs $(\tilde{i},\tilde{i})$ in $\Gamma_2$, and only if there is an edge from $\tilde{i}$ to $i$ in $\Gamma$. 

From the above description of edges in $\Gamma_2$, the following properties can be deduced.  First, the induced subgraph of $\Gamma_2$ on the subset consisting of the $n$ identical-pair vertices (vertices $(1,1), (2,2), \hdots, (n,n)$) has the identical connection structure to the graph $\Gamma$.  
Hence, it follows that this subgraph is strongly connected and ergodic.  
Next, let us show that there is a path from a vertex in this subgraph, say $(1,1)$, to any vertex $(i,j)$ in $\Gamma_2$.  Since the graph $\Gamma$ is ergodic, there exists a $\widehat{k}$ such that there is a path from vertex $1$ to vertex $i$ in $\Gamma$ of length $\widehat{k}$, and also a path of from vertex $1$ to vertex $j$ in $\Gamma$ of length $\widehat{k}$.  From the characterization of edges in $\Gamma$, it thus follows immediately that either there is a path from  $(1,1)$ to $(i,j)$ in $\Gamma_2$ of length $\widehat{k}$, or there is a path of length less than $\widehat{k}$ from  $(\tilde{i},\tilde{i})$ to 
$(i,j)$ where $\tilde{i}$ corresponds to a copier agent.  Thus, we have verified that there is a path from $(1,1)$ to every vertex $(i,j)$ in $\Gamma_2$ (either directly or via $(\tilde{i},\tilde{i})$).  In consequence, the graph $\Gamma_2$ necessarily has only one recurrent class.  Further, since this recurrent class has within it a subgraph that is ergodic, then the recurrent class is known to be ergodic.  (Note that the edge directions in our graphs $\Gamma$ and $\Gamma_2$ are reversed from those of a standard Markov chain; hence the recurrence terminology here is used in reverse from the standard.)

For the HMAN with at least one voter, the full recursion matrix 
$G_{(e)}$ is row-stochastic, and further can be shown to have a single recurrent class which is ergodic. This can be proved in a very similar way to that for the averager-copier model, albeit with some additional sophistication.  The proof is excluded to save space.

The graph-theoretic characterization of $G_{(e)}$ allows proof of mean-square consensus.  For the averager-copier model, the analysis of $G_{2}$ implies that it has a non-repeated eigenvalue $\lambda=1$ that is strictly dominant.  Further, the right eigenvector of $G_{2}$ associated with the unity eigenvalue is
${\bf v}={\bf 1}$, where ${\bf 1}$ indicates a column vector with all entries equal to $1$.  It follows immediately that $\lim_{k \rightarrow \infty} E({\bf x}[k] \otimes {\bf x}[k])=d {\bf 1}$, where 
the scalar $d$ depends on the initial opinion vector ${\bf x}[0]$.  For the HMAN with at least one voter, from the fact that ${G}_{(e)}$ is a row-stochastic matrix with a single recurrent class that is ergodic, it follows (using the same argument as for the averager-copier model) that 
\begin{small} $\lim_{k \rightarrow \infty} E( {\bf x}_{(e)}[k])=d {\bf 1}$, \end{small} where the scalar $d$ again depends on the initial opinions.  
Thus, for any HMAN model, we have that  $\lim_{k\rightarrow \infty} E((x_j[k]-x_1[k])^2)$ is $0$, for any $j=1,\hdots,n$.   $\square$
\end{proof}

Two further remarks about the consensus result are worthwhile.

\noindent {\em Remark 1:} The expected squared difference $ E((x_j[k]-x_1[k])^2)$ approaches zero exponentially with respect to  $k$,   and hence the probability that $|x_j[k]-x_1[k]|<\epsilon$  also approaches $1$ exponentially with  $k$, for any positive $\epsilon$.  Based on this exponential convergence in probability together with state boundedness,  the sequences $x_j[k]-x_1[k]$ can also be shown to converge to zero almost everywhere.  Thus, consensus is also achieved in an almost-everywhere sense.

\noindent {\em Remark 2:} While the averager-copier model and the general HMAN model both achieve consensus, their asymptotic behaviors are qualitatively different.  The agents in the averager-copier model may asymptote to any real number on the interval $[0,1]$, depending on the
network's topology and the initial opinions.  If even one voter agent is present, however, the 
agents may only asymptote to $0$ or $1$, since the voter agents are constrained to have binary opinions.

The consensus dynamics of the HMAN are illustrated using an example with five agents.  In the example, Agents $1$-$3$ are averagers, Agent $4$ is a copier, Agent $5$ is a voter, and the network matrix is \begin{small} $G=\begin{bmatrix} .4 & .2 &.2 & 0 & .2 \\
.2 & .4 & .2 & .2 & 0  \\.2 & .2 & .4 & .2 & 0 \\ 0 & .2 &.2 & .6 & 0 \\ .2 & 0 & 0 & 0 &.8 \end{bmatrix}$. \end{small}  
A simulation (sample path) of this HMAN is shown in Figure \ref{fig:1}, for a random initial state.  The agents' opinions asymptotically converge to $1$ for this particular simulation. The copier and voter agents' opinion trajectories look qualitatively different from the averagers' trajectories: they are more jumpy, and the voter agent's opinion is restricted to be binary. The mean square deviations between Agent $2$ and Agent $1$, Agent $4$ and Agent $1$, and Agent $5$ and Agent $1$ (i.e. $E((x_j[k]-x_1[k])^2)$ for $j=2,4,5$) are computed, and plotted vs time in Figure 
\ref{fig:2}.  These expected errors (deviations) converge to $0$ asymptotically, as guaranteed by Theorem $1$.  The error between the two averager agents diminishes quickly, while the error between the averager and voter agents decays slowly, and the error between the averager and copier agent has an intermediate decay rate.

\begin{figure} [!htb] 
	\centering
	\includegraphics[width=3.7in]{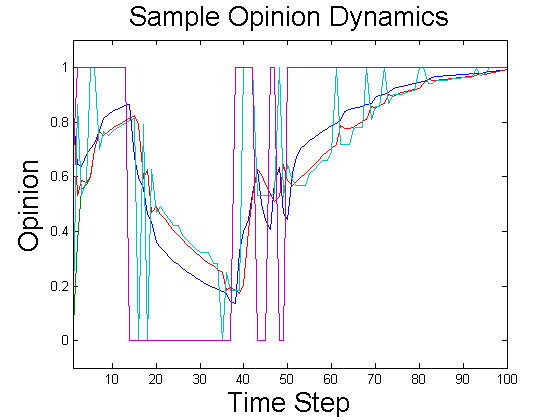}
	\caption{ The agents' opinion trajectories  for a HMAN with five agents (three averagers, one copier, one voter).  Note that the opinion of the voter agent is restricted to $0$ and $1$, while the copier's opinion is distinguished by its jumpiness. Eventually, the HMAN reaches consensus at $1$.}
    \label{fig:1}
\end{figure}

\begin{figure} [!htb] 
	\centering
	\includegraphics[width=3.7in]{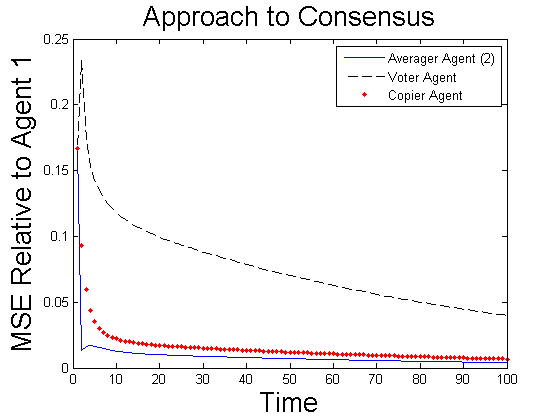}
	\caption{The mean square deviation (error) of other agents from Agent $1$ is plotted vs. time.  
	The errors converge to $0$, however the rates of convergence vary among the different agent types.}
	\label{fig:2}
\end{figure}

\section{Bounds on the Consensus Time}

A main focus of the research on consensus processes has been to characterize or bound the consensus time.  For instance, a number of graph-theoretic results have been developed for the consensus time of the voter and linear consensus models \cite{voter2,lincon1,cox}.  Here, we study the consensus time for HMAN, for the case that  the network graph is ergodic.
For the HMAN, the mean and distribution of the consensus time are closely tied to the spectrum of the moment-recursion matrix $G_{(e)}$.  
Since $G_{(e)}$ is a row-stochastic matrix, it necessarily has dominant eigenvalue $\lambda=1$. Further, the {\em subdominant eigenvalue} $\lambda_s$
of $G_{(e)}$ (the largest-magnitude eigenvalue whose magnitude is less than $1$ or strictly inside the unit circle) primarily governs the worst-case mean consensus time and consensus-time distribution.  Precisely, the distance of the subdominant eigenvalue to the unit circle relates inversely to the worst-case mean settling time (i.e. the further from the unit circle, the faster it reaches consensus).  In fact, an exponential bound on the deviation from consensus and the tail of the consensus-time distribution can be found in terms of $\lambda_s$ (see e.g. \cite{aldous,aldous2}).

The efficiency of consensus via different update mechanisms (averaging, voting, copying) can be compared, by comparing the location of the subdominant eigenvalue for different HMAN models with the same network matrix.  A first simple outcome  is that the averager model, where all agents use averaging, reaches consensus at least as fast as any other HMAN with the same network graph. Further, if the network has at least one voter agent, then the consensus rate is strictly slower than for the  averager model, except for special star-like network graphs.  

The main insight needed to formalize the above result is that the subdominant eigenvalue $\lambda_2$ of 
$G$ is also an eigenvalue of $G_{2}$ which is however not dominant, for any HMAN.  For the averager model, we have that $G_{2}= G \otimes G$.  From the standard spectral analysis of the Kronecker product, it follows that $\lambda_2$ is also the subdominant eigenvalue of $G_2$.  For any other HMAN, the matrix $G_{2}$ is a modification of the matrix $G \otimes G$, where only the rows indexed by identical vertex pairs $(i,i)$ corresponding to copier or voter agents (i.e., rows
$(i-1)N+i$ where $i$ is a copier or voter) are modified.  Further, for these rows corresponding to copiers, only the columns of $G_{2}$ indexed by identical vertex pairs are non-zero; for those corresponding to voters, the entire row is zero in $G_2$.  To continue, consider the eigenvector ${\bf v}_2$ of $G$ associated with $\lambda_2$.  We notice that ${\bf v}={\bf 1} \otimes {\bf v}_2 - {\bf v}_2 \otimes {\bf 1}$  is an eigenvector of $G \otimes G$ for the eigenvalue $\lambda_2$.  Using the additional fact that the entries in ${\bf v}$ indexed by identical vertex pairs are $0$, we thus find that ${\bf v}$ is also a right eigenvector of $G_{2}$ with eigenvalue $\lambda_2$ (since the product $G_{2} {\bf v}$ is identical to $(G \otimes G) {\bf v}$.  Noticing further that: 1) $G_{2}$ is a nonnegative matrix and 2) ${\bf v}$ has both positive and negative entries, it follows immediately from the Frobenius-Perron theorem that $\lambda_2$ is not the dominant eigenvalue of $G_{2}$.  The subdominant eigenvalue of $G_{2}$  is thus at least as large in magnitude as $\lambda_2$, and hence so is the subdominant eigenvalue of $G_{(e)}$.  

In the case that the model
has at least one voter agent, the {\em dominant} eigenvalue of $G_{2}$ is in fact the subdominant eigenvalue of $G_{(e)}$.  Hence, it follows that the subdominant eigenvalue of $G_{(e)}$ must be strictly larger than $\lambda_2$, except in the very special case that $G_{2}$ contains a periodic class with multiple eigenvalues of magnitude $|\lambda_2|$.  It is easy to check that this only happens in star-like networks, wherein the voter agents' opinions are distributed to the remaining agents with certainty in a finite number of time steps.   

Thus, we have shown that the general HMAN reaches consensus no faster than the pure averager model for any specified network graph, and in fact is strictly slower if any voter agents are present (except for some very special topologies).  This result generalizes the observation that voter models reach consensus slower than averaging models, which has been previously characterized via the analysis of coalescing random walks \cite{aldous,aldous2,coalesce}.

Interestingly, while the voter model reaches consensus slower than the averager model for a given network matrix, mixed averager-copier-voter models are even slower provided that there is at least one voter agent.  This follows immediately from the fact that the matrix $G_{2}$ for the mixed model is the sum of the matrix $G_2$ for the voter model and another nonnegative matrix.  Thus, the dominant eigenvalue of $G_2$ for the mixed model, which is the subdominant eigenvalue of $G_{(e)}$, is at least as large in magnitude as for the pure voter model (and is strictly larger except in some very special cases).  Thus, the presence of agents with graded opinions slows the convergence of binary decision-making processes.

Bounds on the consensus time can also be developed based on the fact that $G_{2}$ is a modification of the Kronecker product $G \otimes G$.    
A full development is outside the scope of the brief paper, but one general bound is obtained and its tightness is evaluated from a graph-theoretic perspective, for voter-averager models on an undirected graph. Specifically, a lower bound is obtained
on the subdominant eigenvalue $\lambda_s$, which is the dominant eigenvalue of $G_2$ for the voter-averager model.  For this special case (voter-averager only), this eigenvalue is equivalently the dominant eigenvalue of the principal submatrix $\widehat{G}_2$ of $G \otimes G$, which is formed by deleting the rows and columns corresponding to  identical vertex pairs $(i,i)$ for the $m_v$ voter agents.  Since $\widehat{G}_2$ is a symmetric nonnegative matrix, its dominant eigenvalue is lower-bounded by its average row sum \cite{johnson}.  
Notice that  $\widehat{G}_2$ has $n^2-m_v$ rows.  Also, from the fact that $G \otimes G$ is a symmetric stochastic matrix, the sum of the entries in $\widehat{G}_2$ is necessarily lower bounded by $n^2-2m_v$.  It follows immediately that $\lambda_s \ge \frac{n^2-2m_v}{n^2-m_v}= 1-\frac{m_v}{n^2-m_v}$.  
Thus, the average time to consensus (which equals $\frac{1}{\vert 1-\lambda_s \vert}$) is at least $\frac{n^2-m_v}{m_v}=\frac{n^2}{m_v}-1$.  Thus, consensus in large voter-averager networks with few voter agents is seen to be quite slow.  

The above bound on the consensus rate $\lambda_s$  is tight if: 1) the average row sum of $\widehat{G}_2$ equals its dominant eigenvalue and 2) the voter agents are mutually independent (no edges between any pair). 
The first condition holds if the graph is regular,  while the second holds if the voters are sparsely placed.  Thus, for instance, large Erdos-Renyi random graphs \cite{erdos} with sparse randomly-placed voters achieve the bound.  In particular, consider a network graph $\Gamma$ which is an Erdos-Renyi random graph with connection probability $p$, with identical edge weights between connected vertices (and self-loops chosen to make the row sums $1$).  The consensus rate bound is tight in the limit of large $n$ provided that: $1 << np << \frac{n}{m_v}$.
In Figure 3, the ratio between the average consensus time $\frac{1}{\vert 1-\lambda_s\vert}$ and the bound on the consensus time $\frac{n^2}{m_v}-1$ is shown as a function of the network size $n$, for $p=0.2$ and $m_v=1$.

\begin{figure} [!htb] 
	\centering
	\includegraphics[width=3.7in]{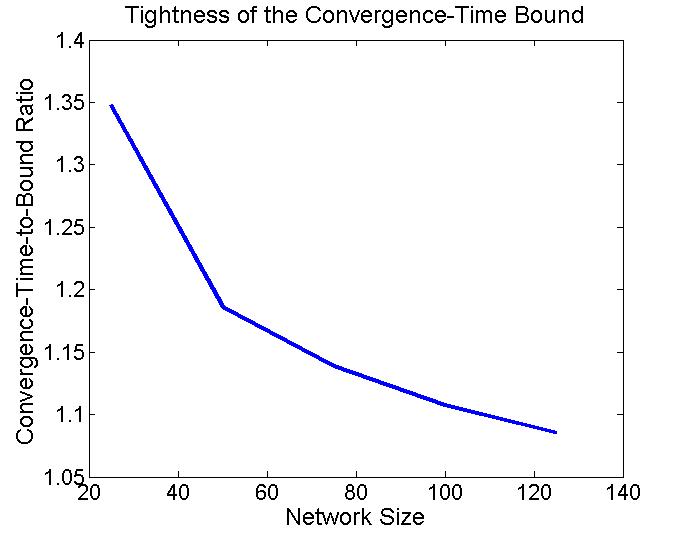}
	\caption{The ratio between the average time to consensus and the derived bound is shown, for Erdos-Renyi random graphs of different sizes.  The bound becomes tight for large graphs.}
	\label{fig:3}
\end{figure}

\end{document}